\documentclass[12pt]{article}
\linespread{1.6}
\hoffset = -0.6 in \textwidth = 6.5 in \voffset = -0.7in \textheight = 8.8in
\newread\raux
\openin\raux=srcltx.sty\relax
\ifeof\raux
\closein\raux
\else
\closein\raux
\usepackage[active]{srcltx}
\fi
\usepackage{cite}
\usepackage{amssymb}
\usepackage{amsthm}
\usepackage{amsmath}
\usepackage{epsfig}
\usepackage{graphicx}

\DeclareMathOperator*{\argmax}{argmax}

\newtheorem{proposition}{Proposition}[section]

\newtheorem{lemma}{Lemma}[section]
\theoremstyle{definition}
\newtheorem{defini}{Definition}[section]
\newenvironment{proofsketch}{\trivlist\item[]\emph{Proof Sketch}:}%

\newcounter{linenumber}
\newcounter{indentation}
\setcounter{indentation}{0}

\newcommand{\algosize}{\scriptsize}
\newcommand{\capnir}{}

\newcommand{\R}{\mathbb{R}}
\newenvironment{niralgo}[4]{
\setcounter{linenumber}{0}
\begin{figure}[#4]
  \centering
  \renewcommand{\capnir}{\caption{#1}\label{#2}}
  \algosize
  \begin{tabular}{|l|}
    \hline\\
    \begin{minipage}{#3} 
      \begin{tabbing}
}
{
      \end{tabbing}
    \end{minipage}\\
    \hline
  \end{tabular}
  \normalsize
  \capnir
\end{figure}
}

\makeatletter
\gdef\linelabel#1%
{%
  \immediate\write\@auxout{
    \string\newlabel{#1}{%
{\thelinenumber}%
{\thepage}%
}%
}%
}
\makeatother

\newcommand{\algorule}{\hspace{1mm}\vline}
\newcommand{\num}{
  \stepcounter{linenumber}%
{\tt \thelinenumber} \>}

\newcommand{\numm}{\num \algorule \>}
\newcommand{\nummm}{\num \algorule \> \algorule \>}

\newcommand{\remove}[1]{}

\newcommand{\kkk}[1]{
\begin{array}{cc}
\left[\begin{array}{cc}
\; & \; \\
\; & \; 
\end{array} \right]_{k\times k} & \rightarrow \overline{s_{#1}} \end{array}
}

\begin{document}
\title{Efficient and Universal Corruption Resilient Fountain Codes}
\author{
Asaf~Cohen, Shlomi~Dolev and Nir~Tzachar
\thanks{Asaf Cohen is with the Department of Communication Systems Engineering, Ben-Gurion University of the Negev, Beer-Sheva, 84105, Israel, {\tt coasaf@bgu.ac.il}}
\thanks{Shlomi Dolev and Nir Tzachar are with the Department of Computer Science, Ben-Gurion University of the Negev, Beer-Sheva, 84105, Israel, {\tt {tzachar,dolev}@cs.bgu.ac.il}.} 
\thanks{A version of this manuscript appeared as a brief announcement in DISC 2008.}
}
\date{}
\maketitle
\begin{abstract} 
In this paper, we present a new family of fountain codes which overcome adversarial errors. That is, we consider the possibility that some
portion of the arriving packets of a rateless erasure code are corrupted in an undetectable fashion. In practice, the corrupted packets may be
attributed to a portion of the communication paths which are controlled by an adversary or to a portion of the sources that are malicious.

The presented codes resemble and extend LT and Raptor codes. Yet, their benefits over existing coding schemes are manifold. First, to overcome the corrupted packets, our codes use information theoretic techniques, rather than cryptographic primitives. Thus, no secret channel between the senders and the receivers is required. Second, the encoders in the suggested scheme are oblivious to the strength of the adversary, yet perform as if its strength was known in advance. Third, the sparse structure of the codes facilitates efficient decoding. Finally, the codes easily fit a decentralized scenario with several sources, when no communication between the sources is allowed.   

We present both exhaustive as well as efficient decoding rules. Beyond the obvious use as a rateless codes, our codes have important applications in distributed computing.
\end{abstract}


\section{Introduction}\label{sec. intro}
Modern erasure codes allow efficient encoding and decoding schemes for use in a wide range of applications. Recent schemes include {\em digital fountain} codes \cite{BLMR}, Raptor codes \cite{RAPTOR} and LT codes \cite{LT}. These codes are rateless erasure codes, in the sense that the encoder produces a practically infinite stream of packets, such that the original data can be efficiently decoded from any sufficiently large subset. These codes allow linear time encoding and decoding, with high
probability. 

Nevertheless, to the best of our knowledge, such codes are only resilient
to {\em lossy channels}, where packets may be lost but not arbitrarily
corrupted. As a consequence, such solutions cannot cope with {\em
Byzantine} channels, where an adversary may arbitrarily change packets; for
instance, when communication is done over parallel paths and some paths are
under the control of an adversary.

Such attacks cannot be attended to using standard error correcting codes. Consider, for example, a Raptor code, where encoded packets are sent in order
to transfer a message. Each packet sent is further encoded by using an error
correcting code. Consider an adversary which corrupts a single packet and the
packet's error correction padding, such that the error correcting code does not
identify the packet as a corrupted packet. This corrupted packet may well
prevent the receiver from correctly decoding the entire original message.

In this work, we design and analyze novel erasure codes which are capable of withstanding malicious packet corruption. Our coding scheme resemble LT codes, yet are information theoretic secure (in a sense to be rigorously defined).
\subsection{Applications}
Corruption resilient fountain codes have numerous applications. We present here but a few.

\noindent {\bf Erasure coding}. Consider a Peer-to-Peer system, where a user
would like to receive some content. To alleviate the load on any single Peer,
the content may be mirrored at several peers. On the other hand, to maximize
bandwidth usage, the user should be able to receive parts of the content from
several mirrors in parallel. Erasure codes, and in particular digital fountain
codes, give rise to a simple solution to this problem; each mirror locally,
and independently, generates packets and sends them to the user. 

Alas, the system described above is very sensitive to Byzantine peers; when even
one of the mirrors intentionally corrupts packets, the receiver will never be
able to reconstruct the requested content. A more robust solution is to use
the corruption resilient fountain codes derived herein, such
that a constant fraction of Byzantine mirrors can be tolerated.

\noindent {\bf Shared value}. Consider a group of sensors
which receive and record global inputs. The inputs may be from a control and command
entity, such as a satellite, or a natural event that the sensors sense.
The sensors wish to store such inputs (or some global history) for later retrieval. 
Assume the initially shared value $x$ includes $k$ bits.
We wish to reduce the storage requirements at each sensor, such that each sensor
will only need to store a fraction of the $k$ bits. We require
that {\it no communication} takes place during the initial stage, so that each
sensor generates its own encoded (short) share of $x$ independently of other sensors.
The sensors may communicate later to reconstruct $x$ from the stored shares. 
Moreover, the solution should be robust against a
constant fraction of Byzantine sensors, where a Byzantine sensor may introduce
arbitrary information into the system.

Corruption resilient fountain codes can be used to solve the {\it shared value} problem. We present a randomized scheme in which shared data is efficiently recorded with {\it no communication} among the sensors. Note that, in some cases, it is also possible to update the encoded data without decoding (e.g. \cite{DLY07}).
\def\USINGECC{
A different approach to the one taken above can be achieved by using error
correcting codes (for example, Reed-Solomon code), when nodes have, or can
randomly choose, distinct identifiers. Namely, encoding a shared value $x$ at
each node by using a $[k, k-2f]$ code in the following way: let $k\ge3f+1$. Each
node builds a polynomial $g$ from $x$, of degree at most $k-2f$ (by padding $x$
with zeros as necessary), in a deterministic way, such that each node builds
the same $g$. Finally, each node $p$ evaluates $g$ at a point which corresponds
to $p$'s identifier, e.g., $g(id_p)$, stores the result and discards $x$.

When a node, $p$, wishes to reconstruct $x$, $p$ needs to collect $k$ distinct
values from other sensors (and $p$), which, together with the value stored
locally at $p$, can be used to reconstruct $x$, overcoming at most $f$ faulty
values (possibly caused by Byzantine nodes). Such $k$ distinct values may be
collected when, for example, $p$ have $k$ neighbors.
}
\subsection{Related Work}\label{subsec. related}
The current literature includes several different strategies for coping with Byzantine adversaries, both in erasure coding and network coding scenarios.
A common approach to overcome Byzantine adversaries when implementing erasure
codes is to check each received packet against a pre-computed hash value, to
verify the integrity of the packet. When using fixed rate codes, the sender can
pre-compute the hash value of each possible packet and publish this hash
collection in a secure location. The receiver first retrieves this pre-computed
hash collection and verifies each packet against the packet's hash as the
packet arrives. The hash is a one way function and, therefore, when the
adversary is computationally limited, the adversary cannot introduce another
packet with the same hash. However, when using rateless codes, such techniques are not feasible; as there is practically an infinite number of different packets, there is no efficient way to pre-compute the hash value of each possible packet and send these hashes to the receiver. Furthermore, inherent to to hashing technique is secure publication of the hashes. The sender must devise a way to securely transfer hashes to the receiver, say, by an expensive flooding technique. In this work, we completely avoid the need for such a secret channel.

In \cite{KFM04}, a slightly different technique for packet
verification in rateless codes is used. Therein, a Merkle-tree~\cite{merkle} based signature structure is suggested. However, the solution
proposed is, still, only valid against computationally bounded adversaries and
relies on the existence of homomorphic, collision-resistant hash functions.
Furthermore, as the size of a Merkle-tree is linear in the size of the original
message, the authors propose a process of repeated hashing to reduce the size of
the tree. Such recursive application of a hashing function is more likely to be
susceptible to attack.
 
An efficient scheme for signature generation of rateless codes appears in
\cite{ZKMH}, where the authors use the computational hardness of
the discreet log to provide a {\sc pki} which enables the sender
to efficiently sign each packet transmitted. The scheme is based on looking at
the data being sent as spanning a specific vector space, and looking at packets
as valid as long as they belong to the same vector space. The verification part
uses standard cryptographic devices to facilitate the check.

In this paper, however, we provide an \emph{information theoretically} secure rateless erasure code, by introducing encoding and decoding techniques which have a provable low probability of not recovering from an attack, assuming sufficiently many packets are collected.

Although not immediately apparent, network codes are closely related to rateless
erasure codes. Erasure codes may benefit from techniques to cope with Byzantine
adversaries developed for network coding and vice versa. Several network coding
related papers discuss the merits of using hash functions to overcome Byzantine
adversaries in network coding protocols, all of which require out of band
communication or preprocessing. Other protocols employ some kind of shared secret
between the sender and receiver to cope with computationally bounded
adversaries.

A different solution appears in~\cite{JLKHKM} where the only
assumption needed to overcome an all-powerful adversary is a shared value
between the sender and the receiver, which may also be known to the adversary. This shared value is, in fact, a parity check matrix, with which a sender inserts redundant bits to the data, enabling the receiver to identify the single correct message from a list generated by a \emph{list decoder}. Using
sufficient redundant information, the receiver can overcome a Byzantine adversary, as long as the adversary cannot control more than half of the network's capacity (which is the minimal cut between the sender and receiver). However, this solution requires, yet again, out of band communication or preprocessing, as the sender and receiver must share a value before starting the communication. More importantly, the redundancy matrix, which controls the amount of the redundant information inserted, needs to be \emph{known at the sender in advance}. In other words, a sender needs to know how strong the adversary is (how many packets it controls) in order to know how much redundant information to insert. In the solutions we present herein, the sender is completely oblivious to the strength of the attacker (or its actual existence). Finally, note that the solution presented in~\cite{JLKHKM} requires batching packets into groups of predetermined size and is not rateless in nature, especially not when a few senders are involved.

In \cite{KK}, Koetter and Kschischang present a different 
approach, based on high dimensional vector spaces; a message of $m\cdot k$ bits
is encoded into a vector space, $V$, of dimension $l\leq m$, which is a subspace
of an ambient vector space $W$ of dimension $l+m$. $l$ is a parameter of the
encoding scheme, $m$ is the number of bits in a message block and $k$ is the
number of blocks in the message. Each packet the sender creates is a
randomly chosen vector (of $l+m$ bits) in $V$. The receiver, upon collecting
enough vectors -- $l$ linearly independent vectors -- can proceed to
reconstruct the original message from the received vector space $U$. The authors
present a minimal distance decoder, which can recover $V$ from $U$ provided
that, when writing $U$ as $U=V\cap U + E$ where $E$ is the error space,
$t=dim(E)$ and $\rho=dim(V\cap U)$, it holds that $\rho+t<l-k+1$.

The codes presented in \cite{KK} have theoretical merits. Nevertheless, they
suffer from several severe implementation problems; to boost the error
resiliency of the code, one should (a) increase $l$ or (b) decrease $k$. The
implications are the need for sending more redundant information in each packet
(increasing $l$) and having larger packet sizes (decreasing $k$). Moreover, to
recover from a Byzantine attack on at most a third of the packets sent, the
codes presented require that $m\in\Omega(\sqrt{n})$ (that is, each block must be
of length $\Omega(\sqrt{n})$ bits), where $n$ is the size of the message. In
contrast, our codes are not limited by block sizes and can cope with one third
of Byzantine corrupted packets regardless of the block size.
\subsection{Main Contributions}\label{subsec. main cont}
In this work, we design and analyze rateless codes with the following merits. First, they are resilient to Byzantine attacks. Under some mild constraints, they asymptotically achieve the optimum rate of $C-2f$, where $C$ is the channel (or network) capacity, and $f$ is the number of corrupted packets. Second, the codes use sparse encoding vectors, enabling a decoding complexity of $O(k^2\log k \log\log k)$ instead of the usual $O(k^3)$ for non-sparse codes. Third, the encoding scheme carried out at the sources does not depend on the strength of the adversary (the number of packets it can corrupt), and hence is universal in this sense. Forth, the codes do not require any secrete channel or shared data between the sources and receivers. Moreover, no communication between the sources is required in case a few sources cooperate to send a common global value.  

The rest of the paper is organized as follows. The system settings and attack strategies on existing codes appear in Section \ref{s:ss}. Our new coding schemes appear in Section \ref{s:local}. The paper is concluded in Section \ref{s:cr}. 
\section{System Settings and Attack Strategies}
\label{s:ss}
A rateless erasure code is defined by a pair of algorithms, $\mathcal{(E,D)}$,
such that, given a set of input symbols, $\mathcal{E}$ can produce a practically
infinite sequence of different output {\em packets}. Moreover, from any large
enough set of packets, $\mathcal{D}$ can be used to recover all of the
original symbols.
A rateless erasure code is usually used between a {\em sender} and a {\em
receiver}, where the sender wishes to send a specific message to the receiver.
The sender starts by dividing the message into symbols, and then uses
$\mathcal{E}$ to generate packets, which are then sent to the receiver over a
lossy channel. The receiver, after collecting enough packets, uses $\mathcal{D}$
to recover the original symbols and, from them, the message.

Next, we define the adversarial model we use. We assume that the computational
power of the adversary is \emph{unlimited}, and the adversary may \emph{sniff all traffic} in the network. Furthermore, the adversary may forge or alter packets such that the receiver cannot differentiate them from legitimate packets. The only restriction we place on the adversary is the number of packets the adversary may corrupt. The restriction is defined by looking at all packets arriving at the receiver.
We then say that an adversary is $c$-bounded, with a parameter $c\leq
\frac{1}{3}$, if, for each $i\ge4$ and for each set of packets collected by the
receiver of size $i$, no more than $c\cdot i$ packets are corrupted. This
property captures the ratio between the number of collected packets
and the number of errors allowed.

Given the above settings, we discuss possible ways an adversary may
influence the Belief Propagation decoding algorithm used by \cite{LT,RAPTOR}.
Belief Propagation decoding suits the following succinct encoding
algorithm: to generate a packet, choose a random subset of input symbols and {\sc
xor} them. The exact distribution from which input symbols are sampled forms the
critical part of the encoding algorithm, and defines the number of packets
needed for correctly decoding the input symbols.

Belief Propagation decoding then works as follows: given a set of
packets, define a bipartite graph, $G=(A,B,E)$, where the bottom layer, $A$,
contains the packets and the upper layer, $B$, the input symbols. An edge exists
between a packet $a\in A$ and a symbol $b\in B$, if $b$ was used in generating
$a$. The Belief Propagation decoder is described in Figure~\ref{algo:BFD}, where
the successful completion of the decoding process depends on the neighbor
distribution.
\def\ALGA{
\begin{niralgo}{Belief Propagation decoder}{algo:BFD}{8cm}{h}
xxx \= xxx \= xxx \= xxx \= xxx \= \kill
definitions:\\
$N(a) = \{ b \in A\cup B|(a,b)\in E\}$\\
\\
\num {\bf While}$(\exists b\in B : |N(b)| > 0)$\\
\numm Let $p$ be a packet, such that $N(p) = \{s\}, s\in B$.\\
\numm Copy $p$ to its only neighbor, $s$, which is then successfully
decoded.\\
\numm {\bf For each} $a\in N(s)$ {\bf do}\\
\nummm $a \gets a \oplus s$ \\
\nummm $E \gets E \setminus \{(a,s)\}$.\\
\numm {\bf Done}\\
\num {\bf Done}
\end{niralgo}
}
\subsection{Attacking the Belief Propagation decoding algorithm} In the simple
scenario depicted in Figure~\ref{fig::simpleAttack}, the adversary can corrupt
each decoded symbol by altering a single encoding packet overall. In general, it
would be interesting to find the best possible strategy for the adversary given
that the adversary works either {\em offline} or {\em online} and is {\em
uniform} or {\em selective}.
\def\FIGA{
\begin{figure}
\centering
\includegraphics[scale=0.55]{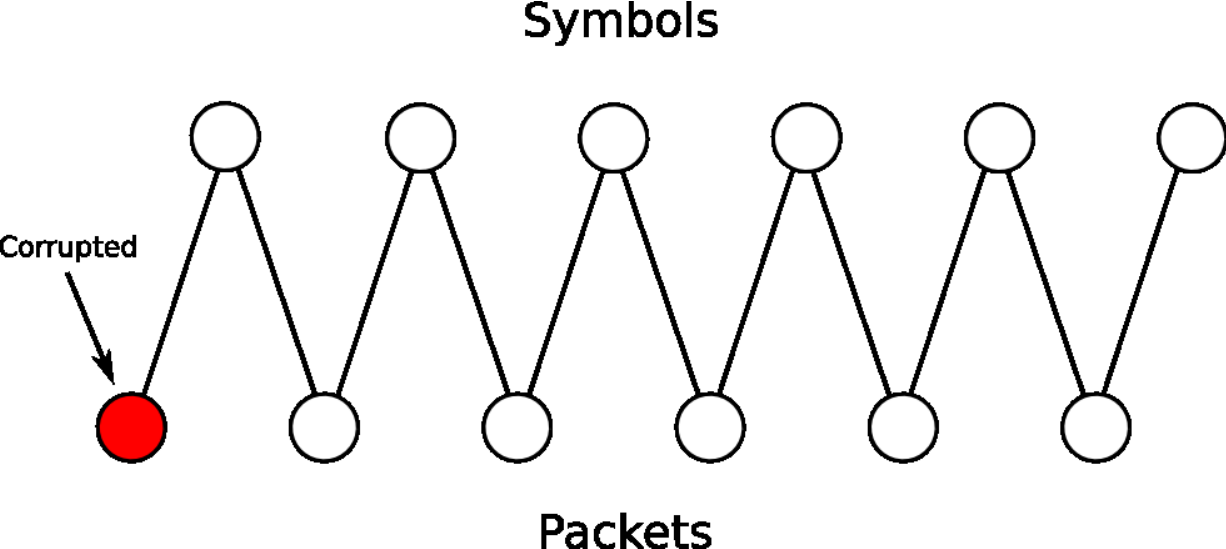}
\caption{Simple Attack scenario}\label{fig::simpleAttack}
\end{figure}
}
\noindent {$\bullet$ Offline vs. Online adversaries}. An {\em offline} adversary
knows, in advance, the graph generated at the {\em receiver}. 
An {\em online} adversary must base his decision to corrupt/inject
a single packet on the information available from the packets that traversed the
system so far.

\noindent {$\bullet$ Uniform vs. Selective adversaries}. A
$c$-bounded {\em uniform} adversary simulates random noise by uniformly
corrupting at most a fraction $c$ of all the received packets. Bare in mind that such an
adversary, though he cannot choose {\em which} packets to corrupt, can choose
{\em how} to corrupt them, negating simple solutions such as using {\sc crc} or
hashes. In contrast, a {\em selective} $c$-bounded adversary can choose,
non-uniformly, which packets to corrupt (and, of course, how to corrupt the
packets). It seems that the offline, selective adversarial model is the most
severe model, and we will target our results appropriately. We will present more
efficient solutions for a weaker model when applicable.

Under the definitions above, an interesting question is what would be the
optimal strategy for a given adversary in order to corrupt the largest number of
decoded symbols, given that the adversary is $c$-bounded. This immediately
translates to an upper bound on the number of symbols the adversary can corrupt,
a bound which may be employed in devising techniques to overcome the adversary.

We illustrate the vulnerability of the Belief Propagation decoder by presenting
the following attacks, using an online, selective adversary. Note that we believe that the specific attacks we list and prove are not the most severe; our tests show that corrupting a very small (constant) portion of the packets corrupts almost half of the symbols.

\noindent {\bf $\bullet$ The Vanishing Symbol Attack}. When using the Robust
Solition distribution to generate packets, one may calculate the fraction of the
packets in which each input symbol participates. Assuming that each symbol
participates in the generation of a fraction of $c$ packets, a simple online
{\em selective} $c$-bounded adversary can remove all traces of the symbol from
the system: fix an input symbol, $B$. The adversary will then remove from each
packet in which $B$ participated the indication that $B$ was {\sc xor}ed into
the packet. The decoder will then never successfully decode the entire message,
as $B$ will always be missing.

In \cite[Theorem 13]{LT}, it is shown that the average degree of a packet,
when using the Robust Soliton distribution, is in $O(\log(k/\delta))$ (where
$1-\delta$ is the probability of successful decoding). As the input symbols for
each packet is chosen uniformly, each input symbol has a probability of
(approximately) $\frac{\log(k/\delta)}{k}$ to be chosen for each packet. This
further implies that a $c$-bounded {\em online, selective} adversary may
prevent the receiver from successfully decoding (approximately)
$\frac{ck}{\log(k/\delta)}$ symbols. We note that this number of symbols is only a
gross estimate, as symbols with smaller degrees than the average are
numerous.  

\noindent {\bf $\bullet$ Odd packets attack}.
The following simple {\em online, selective} adversary can corrupt all decoded
symbols; Consider corrupting all packets which have an odd degree, i.e.,
connected to an odd number of symbols. Corrupt each packet by flipping all bits
(or a subset thereof). Using a simple inductive argument, we are able to prove
that the resulting decoded symbols from the Belief Propagation decoder will all
be flipped.
\begin{proofsketch}
The proof is by inspecting the sets of odd and even degree packets, throughout
the execution of the Belief Propagation decoder. Packets of odd degree are
corrupted, and packets of even degree are not. Moreover, as each packet moves
from one set to the other, all the packets in the odd degree set remain
corrupted and those in the even degree set remain correct. Since each symbol is
eventually decoded by copying a packet of degree one, all decoded symbols will
be corrupted.
\end{proofsketch}

The following proposition shows that indeed, when using the Robust Soliton distribution from~\cite{LT}, the expected fraction of such odd degree packets is less than one third, and the adversary can corrupt all symbols with a non-zero probability.
\begin{proposition}\label{prop. one-half}
When using the Robust Soliton distribution, the odds packets attack has a probability of at least one half to corrupt all decoded symbols.
\end{proposition}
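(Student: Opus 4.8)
The plan is to reduce the claim to a statement about the random packet degrees and then to the classical fact about the median of a binomial. Let $S$ be the set of packets the receiver uses to decode, $n=|S|$, and recall that by the (already sketched) correctness of the odd packets attack, corrupting exactly the odd-degree packets of $S$ forces every decoded symbol to be corrupted. Taking the strongest admissible bound $c=\tfrac13$, such a corruption is within the adversary's budget precisely when the odd-degree packets form at most a $\tfrac13$ fraction of $S$. Since the packet degrees are i.i.d.\ draws from the Robust Soliton distribution $\mu$, the count $X$ of odd-degree packets in $S$ is $\mathrm{Bin}(n,q)$ with $q=\mu(\{d:d\text{ odd}\})$, so it suffices to prove (i) $q<\tfrac13$ with a constant margin, and (ii) $\Pr[X\le n/3]\ge\tfrac12$.

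For (i) I would decompose $\mu$ into its Ideal Soliton part $\rho$, its robustness term $\tau$, and the normalizing constant $\beta=\sum_d(\rho(d)+\tau(d))$. The odd-degree mass of $\rho$ telescopes:
\[
\rho(1)+\sum_{\substack{3\le d\le k\\ d\text{ odd}}}\frac{1}{d(d-1)}
\le\frac1k+\sum_{\substack{d\ge3\\ d\text{ odd}}}\Bigl(\frac1{d-1}-\frac1d\Bigr)
=\frac1k+(1-\ln 2)\approx\frac1k+0.3069 .
\]
For $\tau$, which is supported on $\{1,\dots,\lfloor k/R\rfloor\}$ with $R=\Theta\bigl(\sqrt k\,\ln(k/\delta)\bigr)$, I would bound its total mass --- hence a fortiori its odd part and $\beta-1$ --- by $O\bigl(\tfrac Rk\ln(k/\delta)\bigr)=O\bigl(\tfrac{\ln^2(k/\delta)}{\sqrt k}\bigr)=o(1)$, the ``spike'' $\tau(\lfloor k/R\rfloor)$ being negligible as well. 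As $\beta\ge1$, this gives $q\le\tfrac1k+(1-\ln 2)+o(1)$, which stays bounded away from $\tfrac13$ for all sufficiently large $k$, the slack $\tfrac13-(1-\ln 2)\approx0.026$ absorbing the vanishing terms; incidentally this also proves the ``expected fraction is less than one third'' remark preceding the proposition.

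For (ii) I would invoke the classical fact that the median of $\mathrm{Bin}(n,q)$ lies in $\{\lfloor nq\rfloor,\lceil nq\rceil\}$, whence $\Pr[X\le\lceil nq\rceil]\ge\tfrac12$. With $q\le\tfrac13-\eta$ for a constant $\eta>0$ and $n\ge k$ (so $n\eta\ge1$ once $k$ is large), one gets $\lceil nq\rceil\le nq+1\le n/3-(n\eta-1)\le n/3$, and, being an integer, $\lceil nq\rceil\le\lfloor n/3\rfloor$. Hence $\Pr[X\le\lfloor n/3\rfloor]\ge\tfrac12$: with probability at least one half the odd-degree packets of $S$ number at most $n/3$, so the $\tfrac13$-bounded odd packets attack corrupts every decoded symbol.

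I expect the only real obstacle to be step (i): obtaining the constant $1-\ln 2$ from $\rho$ is immediate, but one must check with some care that neither the $1/k$ mass at degree one, nor the $\tau$ mass (including its spike near $k/R$), nor the division by $\beta$ can push $q$ back up to $\tfrac13$ --- this is exactly where the ``$k$ large enough'' regime is used, and it is the place where a mis-estimate would break the argument. A minor point is the reading of ``$c$-bounded'': I apply the budget constraint at the single relevant size $n=|S|$, corrupting the odd-degree packets of the decoding set, which is the sense in which the attack is a valid $\tfrac13$-bounded strategy.
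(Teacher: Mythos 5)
Your proof follows the paper's overall route (bound the probability $q$ that a packet has odd degree by something below $1/3$, then bound $\Pr[X\le n/3]$ for $X\sim\mathrm{Bin}(n,q)$), but it repairs two points where the paper's own argument is incomplete or merely heuristic. For the first step, the paper computes the odd-degree mass only of the Ideal Soliton $\rho$, getting $\frac1k+1-\ln 2$, and then declares the $<1/3$ bound for $k\ge 38$; but the margin at $k=38$ is razor-thin ($\approx 0.3332$ versus $1/3\approx 0.3333$), so the Robust Soliton's extra mass from $\tau$ (and the normalization by $\beta$) would in fact overturn it at that $k$. You explicitly decompose the Robust Soliton into $\rho$, $\tau$, $\beta$ and argue the $\tau$ contribution is $O(\ln^2(k/\delta)/\sqrt k)=o(1)$, which is precisely the step the paper silently skips, at the price of requiring a larger $k$ threshold than $38$. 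For the second step, the paper invokes the normal approximation (with $Z$ mislabelled $U(0,1)$ rather than $N(0,1)$), which gives an ``$\approx$'' and not a proof; you instead use the exact fact that the median of $\mathrm{Bin}(n,q)$ lies in $\{\lfloor nq\rfloor,\lceil nq\rceil\}$, together with the short arithmetic check that $\lceil nq\rceil\le\lfloor n/3\rfloor$ once $q\le 1/3-\eta$ and $n\eta\ge1$, which makes the $\ge\frac12$ conclusion rigorous. Your per-size reading of the ``$c$-bounded'' constraint is the same one the paper's proof implicitly uses. In short: same strategy, but your version actually closes both gaps the paper leaves open.
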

\begin{proof}
We start by analyzing the Ideal Soliton distribution from~\cite{LT}, which
specifies that, for $k$ input symbols, the degree distribution of each
encoded packet is
$$
P[degree=i]=\rho(i)=\left\{   
\begin{array}{ll}
\frac{1}{k} & i=1\\
\frac{1}{i(i-1)} & i\ge 2
\end{array}
\right.
$$
The probability that for a given packet $p$, the degree is odd is
\begin{eqnarray*}
P[degree(p)\textrm{ is odd}] &=& \sum_{i\ge 1, i \textrm{ is odd}}^k\rho(i) \\
&\le&
\frac{1}{k} + \frac{1}{3\cdot2}+\frac{1}{5\cdot4} + \frac{1}{7\cdot6}+\cdots \\
&=& \frac{1}{k}+\sum_{i=2}^\infty \frac{(-1)^i}{i} = \frac{1}{k}+1-\ln(2).
\end{eqnarray*}
We get that for $k\ge38$, the probability for each packet to be of odd degree is
less than $\frac{1}{3}$. Now, consider a binomial random variable, $X\sim
B(n,p)$, such that $n$ equals the number of packets and $1/6 \leq p \leq 1/3$.
$X$ represents the number of packets of odd degree.
The adversary can successfully corrupt all symbols, as long as $X\leq n/3$.
Using the normal approximation for the binomial distribution (where $Z\sim
U(0,1)$), we get that:
$$ 
P[X\leq n/3] \approx P[Z\leq \dfrac{n/3-np}{\sqrt{np(1-p)}}] \ge \dfrac{1}{2}.
$$
Therefore, the adversary has a probability of at least half to corrupt all
decoded symbols by corrupting at most one third of the packets.
\end{proof}
\section{Corruption Resilient Fountain Codes} \label{s:local}

We start the presentation of our codes by discussing the
encoding phase. We then proceed to establish the necessary tools required for successfully decoding an encoded message, and present several decoding alternatives, discussing the merits of each.

\subsection{Encoding}
\label{sec:enc}
\def\FIGB{
\begin{figure}
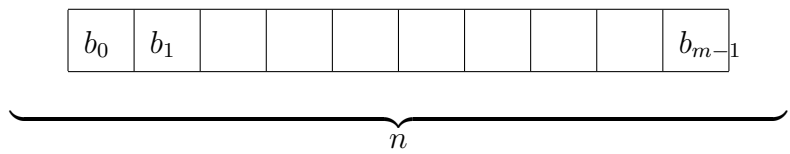

\centering
\begin{tabular}[]{|p{13pt}|p{13pt}|p{13pt}|p{13pt}|p{13pt}|p{13pt}|p{13pt}|p{13pt}|p{13pt}|p{13pt}|}
\hline
$b_0$ & $b_1$ & & & & & & & & $b_{m-1}$ \\
\hline
\end{tabular}
$\underbrace{\quad\quad\quad\quad\quad\quad\quad\quad\quad\quad\quad\quad\quad\quad\quad\quad\quad\quad\quad\quad\quad\quad\quad\quad\quad}_{\textstyle n}$
\caption{Splitting a message}
\label{fig:split}
\end{figure}
}
To encode a given message of $n$ bits, split the message into $m$ blocks (the
input symbols), $b_0, b_1, \ldots b_{m-1}$, each of length $k = \frac{n}{m}$
bits. For each $0\le i < m, 0 \le j < k$, let $b_i^j$ be the $j$'th bit of the $i$'th
message piece.

For a $k$ dimensional vector $\overline{v} = (v_0, v_1, \ldots, v_{k-1})$ over
$\mathbb{GF}(2)$ define the characteristic boolean function (or linear form)
$f_{\overline{v}}:\mathbb{GF}(2)^k\rightarrow \mathbb{GF}(2)$ in the following way:
$f_{\overline{v}}(x_0, x_1, \ldots, x_{k-1})= \bigoplus_j x_j v_j$ (in other
words, the inner product of $x$ and $v$ over $\mathbb{GF}(2)$).

To generate a packet $p$, randomly select a $k$ dimensional vector
$\overline{r}\in \mathbb{GF}(2)^k$ (the distribution used to sample
$\overline{r}$ will be discussed later), and set $p = \langle \overline{r},
f_{\overline{r}}(b_0), f_{\overline{r}}(b_1), \ldots, f_{\overline{r}}(b_{m-
1})\rangle$. Note that, for brevity, each packet is assumed to be of length
$k+m$ bits; later, we show how to reduce the amount of redundancy needed from
$k$ to $\log^2{k}$ bits, in essence by compressing the vector $\overline{r}$. We define the
following:

\begin{defini}
Two packets, $p_1$ and $p_2$, are termed {\em independent} if their associated
$\bar{r}$ vectors are independent over $\mathbb{GF}(2)$.
\end{defini}

Alternatively, a more efficient decoding (hardware-wise) is achieved by setting
$v_i = \langle b_0^i, b_1^i, \ldots, b_{m-1}^i \rangle$ and then 
$p = \langle \tilde{r}, f_{\tilde{r}}(v_0), f_{\tilde{r}}(v_1),
f_{\tilde{r}}(v_{k-1}) \rangle$, where $\tilde{r}\in \mathbb{GF}(2)^m$. This representation is more efficient as it
is faster to {\sc xor} entire {\em words} than individual {\em bits}.  We 
note that such encoding is similar to the one presented in \cite{LT}. 
The decoding procedures below are applicable for both encoding
alternatives. For brevity, we only discuss the first.
\subsection{Conditions for Linear Independence}\label{subsec. conditions}
Before introducing the decoding algorithms, we prove the following three Lemmas, which discuss the number of uncorrupted packets required in order to have a set $k$ independent equations. These lemmas are required in order to determine how many packets should a decoder collect before attempting to decode, depending on its knowledge on the number of corrupted packets and, of course, the type of the adversary (random or selective). The first considers the probability that a uniformly random linear system is not of full rank.  
\begin{lemma}\label{lemma:prob_uni}
The probability that a system $M$ of $m$ vectors of dimension $k$ ($m\ge k$),
chosen at random uniformly and independently over $\mathbb{GF}(2)$, is not of full rank $k$ is at
most $2^{k-m}$.
\end{lemma}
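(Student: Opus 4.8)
The plan is to bound the failure probability by summing, over all $j = 0, 1, \ldots, k-1$, the probability that the $(j+1)$-st vector added to the system falls into the span of the first $j$ vectors, conditioned on those first $j$ being linearly independent. Concretely, I would build the matrix one row at a time and track the event $F$ that the first $k$ rows fail to span $\mathbb{GF}(2)^k$. If at some stage the first $j$ chosen vectors are independent, their span has exactly $2^j$ elements, so a freshly drawn uniform vector lies in that span with probability $2^j / 2^k = 2^{j-k}$. Thus the probability that the first $k$ vectors do \emph{not} form a basis is at most $\sum_{j=0}^{k-1} 2^{j-k} = 2^{-k}(2^k - 1) < 1$, which is already a (weak) bound; but this only uses $k$ of the $m$ available vectors.

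To get the sharper bound $2^{k-m}$, I would instead exploit \emph{all} $m$ vectors. The cleanest route: the system has rank $< k$ if and only if there exists a nonzero vector $w \in \mathbb{GF}(2)^k$ orthogonal to every one of the $m$ chosen vectors (i.e. $w$ lies in the left kernel — equivalently, the rows all lie in the hyperplane $w^\perp$). For a fixed nonzero $w$, a single uniformly random vector is orthogonal to $w$ with probability exactly $1/2$, and since the $m$ vectors are independent, the probability that all $m$ are orthogonal to $w$ is $2^{-m}$. There are $2^k - 1$ nonzero choices of $w$, so a union bound gives $\Pr[\text{rank} < k] \le (2^k - 1)\, 2^{-m} < 2^{k-m}$, which is exactly the claimed bound. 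I would write it this way, since it is shorter and gives the stated constant directly.

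The only real subtlety is the orthogonality computation: I should note that for $w \neq 0$, the linear form $x \mapsto \langle x, w\rangle$ on $\mathbb{GF}(2)^k$ is surjective onto $\mathbb{GF}(2)$, hence its kernel $w^\perp$ has size exactly $2^{k-1}$, so a uniform $x$ satisfies $\langle x, w \rangle = 0$ with probability $2^{k-1}/2^k = 1/2$; independence of the $m$ vectors then multiplies these. There is essentially no obstacle here — the argument is a textbook union bound — so the main thing to get right is simply presenting the equivalence "rank $< k$ $\iff$ some nonzero $w$ annihilates all rows" cleanly, and being careful that the bound is $(2^k-1)2^{-m}$, which we then loosen to $2^{k-m}$ for the statement.
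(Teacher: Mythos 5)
Your argument is correct and takes a genuinely different route from the paper. The paper (adapting the Raptor codes argument) enumerates the rows and conditions on the first $k$ of them: if those $k$ rows are not independent, they lie in some hyperplane, and the paper bounds the probability that each of the remaining $m-k$ uniform rows also lands in that hyperplane by $1/2$ per row, yielding $2^{k-m}$. Your route is the dual union bound: $\mathrm{rank}(M)<k$ iff some nonzero $w\in\mathbb{GF}(2)^k$ satisfies $Mw=0$, and for each fixed nonzero $w$ the $m$ independent uniform rows all annihilate $w$ with probability exactly $2^{-m}$, so $\Pr[\mathrm{rank}(M)<k]\le(2^k-1)2^{-m}<2^{k-m}$. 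Both give the identical bound, but your version is arguably the cleaner one: the paper's phrasing (``each remaining vector should be in this half plane'') is a little loose, since rank deficiency only forces all $m$ rows into \emph{some} common hyperplane, not necessarily one determined by the span of the first $k$ rows alone; your union bound over candidate kernel vectors $w$ sidesteps that subtlety and treats all $m$ rows symmetrically. The sequential-span calculation in your first paragraph is fine as a warm-up but, as you note, uses only $k$ rows and gives a far weaker constant, so discarding it in favor of the union bound is the right call.
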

The importance of Lemma \ref{lemma:prob_uni} is clear. Suppose a decoder collects $k+\epsilon$ uncorrupted packets, $\epsilon \ll k$,  whose coefficients are drawn from the uniform distribution. Then, with probability at least $1-2^{-\epsilon}$, the packets include $k$ independent equations. 
\begin{proof} (Adapted from \cite{RAPTOR}): enumerate the vectors of $M$
arbitrarily. Consider the case in which the first $k$ vectors are not
independent. If so, these vectors span, at most, a half plane of dimension $k-
1$. For the entire system $M$ to be of rank less than $k$, each remaining vector
should be in this half plane. The probability of each of these $m-k$ vectors to
be uniformly chosen from this half place is at most $1/2$. Therefore, the
probability that all of the remaining $m-k$ vectors are chosen from the same
half plane is at most $2^{k-m}$.
\end{proof}
The next lemma, however, stands at the basis of our sparsity result. It states that even when the coding vectors are sparse (with approximately $\log k$ non-zeros for each vector), the same rank result holds. With this lemma, it will be easy to show that a more efficient decoding is possible (since sparse matrices are less complex to invert \cite{sparse_linear}) without harming the strength of the code.
\begin{lemma}\label{lemma:prob_log}
The probability that a system $M$ of $m$ vectors of dimension $k\leq m$
over $\mathbb{GF}(2)$,
where each coordinate is $1$ with probability $p$, independently of the others, with
$\frac{\log{k}+c}{k} \leq p \leq 1- \frac{\log{k}+c}{k}$, where
$c\rightarrow\infty$ slowly, is not of full rank
$k$ is at most $2^{k-m}$.
\end{lemma}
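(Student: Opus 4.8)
The plan is to argue in the spirit of Lemma~\ref{lemma:prob_uni} — a union bound over the nonzero vectors in the right kernel of $M$ — but now keeping track of Hamming weight, since for a \emph{sparse} random row the probability of lying in a prescribed hyperplane is no longer a flat $\tfrac12$: it is close to $\tfrac12$ only when the hyperplane's normal has large weight, and guaranteeing ``large enough weight'' is exactly what the density condition on $p$ provides.

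View $M$ as $m$ i.i.d.\ rows $r_1,\dots,r_m$ with i.i.d.\ Bernoulli$(p)$ coordinates. Then $M$ has full column rank $k$ iff no nonzero $a\in\mathbb{GF}(2)^k$ satisfies $Ma=0$, and for a fixed $a$ of weight $w$ the $m$ events $\{\langle r_i,a\rangle=0\}$ are independent, each of probability $q_w:=\tfrac12\bigl(1+(1-2p)^w\bigr)$ — the chance that a {\sc xor} of $w$ i.i.d.\ Bernoulli$(p)$ bits vanishes. Hence
\[
\Pr[\operatorname{rank}M<k]\ \le\ \sum_{w=1}^{k}\binom{k}{w}\,q_w^{\,m},\qquad q_w-\tfrac12=\tfrac12(1-2p)^w .
\]
Since the hypothesis forces $|1-2p|\le 1-\tfrac{2(\log k+c)}{k}$, the bias $|q_w-\tfrac12|$ decays geometrically in $w$ at ratio $\le e^{-2(\log k+c)/k}$. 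I would split the sum at a weight $w_0$ chosen so that $|1-2p|^{w_0}=o(1/m)$ — the hypothesis makes $w_0=\Theta\!\bigl(\tfrac{k\log m}{\log k+c}\bigr)$ sufficient. For $w\ge w_0$ we get $q_w^{\,m}\le 2^{-m}\bigl(1+|1-2p|^w\bigr)^{m}\le(1+o(1))\,2^{-m}$, so this part of the sum is at most $(1+o(1))\,2^{-m}\sum_w\binom{k}{w}=(1+o(1))\,2^{k-m}$ — the claimed order of magnitude — and for $w\approx k/2$, where $\binom{k}{w}$ is genuinely of order $2^k$, the factor $(1-2p)^w$ is doubly-exponentially small, so there the estimate is essentially tight, just as in Lemma~\ref{lemma:prob_uni}.

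The delicate part — and the reason the hypothesis needs $p\ge\tfrac{\log k+c}{k}$ with $c\to\infty$ — is the small-weight tail $w<w_0$, where $q_w$ is \emph{not} near $\tfrac12$: a weight-$1$ kernel vector merely says $M$ has an all-zero column, and $q_1^{\,m}=(1-p)^m$ is close to $1$ when $p$ is at the connectivity scale $\tfrac{\log k}{k}$. Here one uses $q_w\le e^{-p\,s_w}$ with $s_w:=\sum_{j=0}^{w-1}(1-2p)^j$ (which is of order $\min\{w,1/p\}$ in the relevant range), together with $\binom{k}{w}\le(ek/w)^w$ and $pm\ge\log k+c$, to see that each such term is at most of order $e^{-cw}/w!$ (up to a harmless polynomial factor), so it sums to $O(e^{-c})=o(1)$; the intermediate weights, $w$ between a constant and $\approx k/2$, are handled the same way since there $\binom{k}{w}=2^{o(k)}$ while $q_w^{\,m}=2^{-\Theta(k)}$. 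Assembling the ranges gives $\Pr[\operatorname{rank}M<k]\le 2^{k-m}+o(1)$, which in the regime of interest ($m=k+\epsilon$, $\epsilon\ll k$) is $\le 2^{k-m}$ for large $k$; the clean constant can alternatively be read off from the known limiting rank distribution of sparse random $\mathbb{GF}(2)$ matrices under exactly this density assumption, namely $\Pr[\operatorname{rank}<k]\to 1-\prod_{i>m-k}(1-2^{-i})\le 2^{-(m-k)}$. The hardest step is precisely this low-weight control: unlike the uniform setting of Lemma~\ref{lemma:prob_uni} there is no blanket $\tfrac12$ bound on a sparse row hitting a hyperplane, and $p\ge\tfrac{\log k+c}{k}$ is the sharp threshold that makes the near-constant columns — the only obstruction to full rank at this density — vanish with probability $1-o(1)$.
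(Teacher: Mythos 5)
Your proof is correct in spirit but takes a genuinely different route from the paper's. The paper simply invokes \cite[Theorem 1]{Cooper00}, which states that for exactly this density regime the rank distribution of a sparse random $\mathbb{GF}(2)$ matrix converges, as $k\to\infty$, to the same limiting law as in the dense case, so that $\Pr[\operatorname{rank}M<k]\to 1-\prod_{j=m-k+1}^\infty(1-2^{-j})$; the rest of the paper's proof is a short elementary induction showing $\prod_{j=n}^{\infty}(1-2^{-j})\ge 1-2^{-(n-1)}$, which yields the bound $2^{k-m}$. You instead build the estimate from scratch via a weight-stratified union bound over kernel vectors, using the piling-up identity $q_w=\tfrac12\bigl(1+(1-2p)^w\bigr)$ and splitting at a threshold weight $w_0$. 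Your route is more self-contained and, more importantly, it exposes the \emph{mechanism}: the hypothesis $p\ge(\log k+c)/k$ with $c\to\infty$ is exactly what kills the low-weight terms $\binom{k}{w}q_w^m$, whereas Cooper's theorem is quoted as a black box and that mechanism stays hidden. The trade-off is precision: your assembly yields $\Pr[\operatorname{rank}M<k]\le 2^{k-m}\bigl(1+O(e^{-c})\bigr)+O(e^{-c})$, which is $\le 2^{k-m}$ only in the limit $c\to\infty$; the paper's route gets the clean constant $2^{k-m}$ directly from the limiting product, though it too is really only an asymptotic statement (Cooper's theorem is a limit as $k\to\infty$). You in fact note the paper's route yourself in your closing sentence, so the two arguments are complementary rather than in conflict: yours explains why the density threshold is where it is, and the paper's gets the tidy constant with less work by outsourcing the hard part to a known theorem.
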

We term the distribution in Lemma \ref{lemma:prob_log} the {\em log distribution}.
\begin{proof}
By \cite[Theorem 1]{Cooper00}, the required probability, $P(k,m)$, as $k\rightarrow \infty$, is:
\[ 
\lim_{k \rightarrow \infty } P[M\textrm{ is not of full rank}] = 
1-\prod_{j=m-k+1}^\infty \left(1-2^{-j}\right)
\]

Let $x_i = (1-2^{-i})$, and set $B_{n,l} = \prod_{j=n}^{n+l} x_j$. We will first
show that $\forall n,l: B_{n,l} > x_{n-1}$. The proof is by induction over $l$; it
can easily be verified that for all $n$, $B_{n,0} = x_n > x_{n-1}$. Assume the
relation holds for all $n$ and for a given $l$. We will show the relation holds
for all $n$ and $l+1$: 
\begin{eqnarray*}
 B_{n,l+1} &=& x_n \underbrace{x_{n+1} x_{n+2} \cdots x_{n+l} x_{n+l+1}}_{B_{n+1,l}} 
 = x_n \cdot B_{n+1,l} \\
& \geq & x_n\cdot x_n  \;\;\;\textrm{(using the inductive assumption)}\\
 &=& \left(1-2^{-n}\right)^2  = \frac{2^{2n} -2^{n+1} + 1 }{2^{2n}} \\
 &=& (1-2^{-n+1}) + 2^{-2n} 
 =  x_{n-1} + 2^{-2n} > x_{n-1}
\end{eqnarray*}

Now, for a given $n$, as the series $\{B_{n,l}\}_{l=0}^{\infty}$ is
monotonically decreasing, and according to the above is bounded by $x_{n-1}$,
there exists a constant $L(n)$, such that $\lim_{l\rightarrow \infty}B_{n,l} =
L(n)$. We will now show that $L(n) \ge x_{n-1}$. Assume, towards contradiction,
that $L(n) < x_{n-1}$. From the definition of the limit, for each $\epsilon$
there exists an $l$ such that $|B_{n,l}-L(n)| < \epsilon$. Take $\epsilon =
x_{n-1}-L(n)$, and we get that there exists an $l$ such that:
$|B_{n,l}-L(n)| < x_{n-1}-L(n)$. Removing the absolute value, as the
series converges from above, results in $B_{n,l}-L(n) < x_{n-1}-L(n)$, that is,
 $B_{n,l} < x_{n-1}$, which is a contradiction. 
As a result, we get that 
\begin{eqnarray*}
 P(k,m) &=&1-\prod_{j=m-k+1}^\infty \left(1-2^{-j}\right)
= 1-\lim_{l\rightarrow \infty} B_{m-k+1,l} 
\\ &<& 1- x_{m-k} 
= 2^{k-m}
\end{eqnarray*}
\end{proof}
The third lemma considers the possibility that \emph{any subset of sufficient size}, out of the $|N|$ collected packets, includes $k$ independent equations. It will be applicable when one considers decoding under selective attacks, as in these cases it is no longer true that the uncorrupted packets arriving at the decoder have the originally intended distribution, and the previous two lemmas do not apply directly. Let $h(p)=-p\log p -(1-p)\log(1-p)$ denote the binary entropy function. We have the following.
\begin{lemma}\label{lemma:prob_selective}
Let $S = \{p_j | p_j=(r_j, f_{\overline{b_l}}(r_j))\}$ be a set of packets,
all generated using the uniform distribution, or the log distribution of
Lemma~\ref{lemma:prob_log}. Assume $|S|=a\cdot k$, $a \in \R^+$. Then, with probability at least $1-2^{-k(a-b-1-ah(b/a))}$, where $b\in \R^+$, $b<a$, every subset of $(a-b)k$ packets out of $S$ contains an independent subset of size $k$.
\end{lemma}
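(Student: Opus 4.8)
The plan is to turn the statement into a union bound over the $\binom{ak}{bk}$ possible "small" subsets and then invoke the full-rank estimates already proved in Lemmas~\ref{lemma:prob_uni} and~\ref{lemma:prob_log}. First I would rephrase the event in terms of rank: a set of vectors over $\mathbb{GF}(2)^k$ contains a linearly independent subset of size $k$ if and only if it has rank $k$ (equivalently, rank $\ge k$). Hence the complement of the desired event is exactly
\[
\mathcal{B}\;=\;\big\{\,\exists\, T\subseteq S,\ |T|=(a-b)k,\ \mathrm{rank}\{r_j : j\in T\}<k\,\big\}.
\]
Note that the claimed bound is only meaningful when $a-b-1-a\,h(b/a)>0$, and since $h\ge 0$ this forces $a-b>1$; in particular every subset $T$ under consideration has $|T|=(a-b)k\ge k$, so Lemmas~\ref{lemma:prob_uni} and~\ref{lemma:prob_log} apply to it verbatim.

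Next, I would apply a union bound over the index sets $T$. For each fixed $T$ with $|T|=(a-b)k$, the vectors $\{r_j\}_{j\in T}$ are i.i.d.\ draws from the uniform distribution or from the log distribution of Lemma~\ref{lemma:prob_log}; since $|T|\ge k$, the probability that they fail to span $\mathbb{GF}(2)^k$ is at most $2^{k-|T|}=2^{-(a-b-1)k}$ by the corresponding lemma, uniformly over $T$. Combining this with the standard entropy estimate $\binom{ak}{bk}\le 2^{ak\,h(b/a)}$ (treating $ak,bk$ as integers, rounding if necessary), I get
\[
P[\mathcal{B}]\;\le\;\binom{ak}{bk}\,2^{-(a-b-1)k}\;\le\;2^{ak\,h(b/a)-(a-b-1)k}\;=\;2^{-k\left(a-b-1-a\,h(b/a)\right)}.
\]
Taking complements gives that, with probability at least $1-2^{-k(a-b-1-ah(b/a))}$, every subset of size $(a-b)k$ out of $S$ contains an independent subset of size $k$, as claimed.

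There is no serious obstacle here; the argument is essentially a counting step plus the earlier lemmas. The one design decision worth stating is the choice to union-bound over the $\binom{ak}{bk}$ subsets rather than over the $2^k-1$ hyperplanes of $\mathbb{GF}(2)^k$: for the log distribution a sparse normal vector of small Hamming weight is annihilated by a random sample point with probability close to $1$, so the hyperplane union bound would be far too weak, whereas the per-subset estimate of Lemma~\ref{lemma:prob_log} already absorbs the sparsity. I would also remark explicitly that the range $b<a$ (needed for $h(b/a)$ to be defined), together with positivity of the exponent, is exactly the range in which each enumerated subset has strictly more than $k$ vectors, legitimizing the use of the earlier lemmas.
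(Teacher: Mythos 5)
Your proof is correct and follows essentially the same route as the paper's: a union bound over the $\binom{ak}{bk}$ subsets of size $(a-b)k$, the per-subset rank bound $2^{k-(a-b)k}$ from Lemmas~\ref{lemma:prob_uni} and~\ref{lemma:prob_log}, and the entropy estimate $\binom{ak}{bk}\le 2^{akh(b/a)}$. Your added remarks on why one must union-bound over subsets rather than hyperplanes, and on the need for $a-b>1$, go slightly beyond the paper's terse computation but do not change the argument.
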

For example, for $b=1$ and $a=7$, this probability is approximately $1-2^{-0.123k}$. Since $bk$ will later denote the number of corrupted packets, and $a$ will be a constant chosen by the decoder, depending on $b$, we write $a(b)$ to denote the value of $a$ chosen by the decoder to ensure that $a-b-1-ah(b/a) > 0$.
\begin{proof}
Using the bounds given in Lemmas~\ref{lemma:prob_uni} and~\ref{lemma:prob_log},
 we have:
\begin{eqnarray*}
Pr\big[\exists S'\subset S : |S'| = |S|-bk \land 
rank(S')<k \big] 
&\leq & \binom{|S|}{|S|-bk}2^{k-(|S|-bk)} \\
&=& \binom{ak}{bk}2^{k-ak+bk}  
\leq 2^{ak\cdot h(b/a)}2^{k-ak+bk} \\
&=&2^{-k(a-b-1-a h(b/a))}.
\end{eqnarray*}
\end{proof}
\vspace{-1cm}
\subsection{Decoding}\label{subsec. decoding}
We present several possible ways to decode a value, where the trade off between
the number of packets which need to be collected and the decoding time is
investigated. 
We will limit the discussion to decoding a given message block, $b_l$, where all
message blocks may be decoded in parallel, using the same technique. Throughout the discussion, it is beneficial to consider both the total number of collected packets at the decoder (both original and corrupted) and the number of corrupted packets collected. Hence, we let $N$ be the set of all collected packets, and $f$ denote the number of corrupted packets collected. Note that $c=\frac{f}{|N|}$.

The first decoding algorithm is a majority test, included here to illustrate a simple, yet deterministic and polynomial, decoding procedure. While the complexity is polynomial in $k$, it is not rate-optimal in terms of the number of packets that needs to be collected for successful decoding. The randomized algorithm suggested later is superior both in expected rate and expected complexity.
 
\noindent {\bf Majority voting}. The decoding algorithm is depicted in
Figure~\ref{figure:majority}, and is applicable to both the uniform and
selective adversarial models. To reconstruct a given message block, $b_l$, given
that at most $f$ faults occurred, we need to collect $2f+1$ pairwise disjoint
sets of packets, $S_1, S_2, \ldots, S_{2f+1}$, such that each set contains
exactly $k$ independent packets. In Lemma \ref{lemma:prob_uni} it was shown that indeed an addition of a constant number of packets to a set of $k$ packets, ensures (with high probability) $k$ independent equations. This applies to a random adversary, as the packets it corrupts are randomly selected, \emph{hence the distribution of the uncorrupted packets seen by the decoder is still random}. However, this is not true for a selective adversary. Such an adversary has the ability to chose which packets to corrupt, and, consequently, change the distribution of the uncorrupted packets seen by the decoder, such that Lemmas \ref{lemma:prob_uni} and \ref{lemma:prob_log} will not hold as is. In this case, Lemma \ref{lemma:a_selective} comes in handy, as it assures that if enough packets are collected, with high probability \emph{any subset of $(a-b)k$} packets will give the desired result. Clearly, more packets need to be collected ($(a(b)+b) k$ packets, where $f=b\cdot k$ is the number of corrupted packets), but the trade-off allows handling a selective adversary.

From each set, $S_j$, we can reconstruct an $s_j$ as a candidate message block.
It then follows that the majority of the values is the correct message block. In
other words, $b_l = \argmax_{s_j}|\{s_i : s_i = s_j\}|$.
\def\FIGC{
\begin{figure}
\begin{displaymath}
\left.
\begin{array}{rl}
1. & \kkk{1}\\
2. & \kkk{2}\\
\vdots\\
2f+1. & \kkk{2f+1}
\end{array}
\right\} majority
\end{displaymath}
\caption{Using majority logic to decode a message piece}
\label{figure:majority}
\end{figure}
}
Note that to ensure that $|N|$, the number of packets collected,
will suffice to compose $(2f+1)$ sets of $k$ packets, we need $c \leq \frac{1}{2k}-\frac{1}{2N} \approx \frac{1}{2k}$. Thus, Majority voting decoding may be used only when the adversary is at most $\frac{1}{2k}$-bounded.
The running time of such an algorithm is dominated by the need to solve the $(2|N|c+1)$ equation sets, which takes, in general, $O(2Nck^3)$. However, using the \emph{sparse coding vectors} of Lemma \ref{lemma:prob_log}, the complexity is $O(2Nck^2 \log k \log\log k)$, as sparse equations can be solved more efficiently \cite{sparse_linear}. 

At the other extreme of the trade-off between complexity and rate, we present an asymptotically rate optimal algorithm using exhaustive search.

\noindent {\bf Exhaustive search algorithm}. 
Similar to the Majority test algorithm, we assume the number of packets corrupted, $f$, is known to the decoder, and his goals are to both \emph{decide how many packets to collect and, of course, decode the original block}.
\begin{lemma}\label{lemma:a_random}
Let $N = \{p_j | p_j=(r_j, f_{\overline{b_l}}(r_j))\}$ be a set of packets, such
that $|N|\ge k+2f+\epsilon$, drawn using either the uniform distribution or the
log distribution. Define the following matrix, $\hat{A} = (r_j)$ and
let $\bar{b} = (f_{\overline{b_l}}(r_j))$. Assuming that no more than $f<k$
packets are corrupted by a uniform adversary, and that with very high
probability $k+\epsilon$
packets contain a subset of $k$ independent packets. Then $b_l$ is the
only solution to the following equation system which satisfies at least
$k+f+\epsilon$ equations:
$\hat{A}\cdot\bar{x} = \bar{b}$.
\end{lemma}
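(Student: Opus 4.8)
The plan is to show two things: first, that $b_l$ (i.e.\ $\bar x = \overline{b_l}$) does satisfy at least $k+f+\epsilon$ of the equations, and second, that no other vector $\bar x \neq \overline{b_l}$ can satisfy that many. The first part is immediate: the true block $\overline{b_l}$ satisfies every equation coming from an uncorrupted packet by construction of the encoding ($f_{\overline{b_l}}(r_j)$ is precisely the entry placed in packet $p_j$), and there are at least $|N| - f \geq k + f + \epsilon$ such packets. So $\overline{b_l}$ is a valid solution in the required sense.

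For uniqueness, I would argue by contradiction: suppose some $\bar x' \neq \overline{b_l}$ also satisfies at least $k+f+\epsilon$ equations. The set of equations satisfied by $\overline{b_l}$ has size $\geq k+f+\epsilon$ and the set satisfied by $\bar x'$ has size $\geq k+f+\epsilon$; since both are subsets of the $|N| \geq k+2f+\epsilon$ collected packets, their intersection has size at least $2(k+f+\epsilon) - (k+2f+\epsilon) = k+\epsilon$. Thus there is a set of at least $k+\epsilon$ packets whose equations are simultaneously satisfied by both $\overline{b_l}$ and $\bar x'$. On this common set, $\hat A'(\overline{b_l} - \bar x') = 0$ where $\hat A'$ is the submatrix of rows in the intersection; since $\overline{b_l} - \bar x' \neq 0$, the matrix $\hat A'$ has rank less than $k$. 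But by the hypothesis (via Lemma \ref{lemma:prob_uni} or Lemma \ref{lemma:prob_log}), any $k+\epsilon$ packets contain a subset of $k$ independent rows with high probability, so $\hat A'$ has rank $k$ — a contradiction. Hence $\overline{b_l}$ is the unique solution satisfying at least $k+f+\epsilon$ equations.

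The only subtle point — and the one I would be most careful about — is the interface between the deterministic counting argument and the probabilistic rank guarantee. The lemma's hypothesis gives that $k+\epsilon$ packets drawn from the intended distribution contain $k$ independent ones with high probability; but here the intersection of the two satisfying-sets is a set of packets that depends on $\bar x'$, which an adversary (even a uniform one, choosing how to corrupt) could in principle correlate with. Since we are in the uniform-adversary regime, the uncorrupted packets retain their original distribution, so the cleanest phrasing is: with high probability over the draw of all $|N|$ coefficient vectors, \emph{every} subset of $k+\epsilon$ of them of size $\geq k+\epsilon$ contains $k$ independent vectors — and for this one should invoke a union-bound version of Lemmas \ref{lemma:prob_uni}/\ref{lemma:prob_log} exactly as in Lemma \ref{lemma:prob_selective}, rather than the bare statement. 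I would make this quantification explicit so that the "with very high probability" in the hypothesis is doing the work uniformly over all candidate $\bar x'$, and then the contradiction above goes through cleanly.
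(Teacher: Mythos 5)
Your existence argument is right, and the overall contradiction strategy is sound, but the specific pigeonhole/intersection route has two gaps that the paper's proof sidesteps.

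First, the inequality direction: from $|A| \ge k+f+\epsilon$, $|B| \ge k+f+\epsilon$ you get $|A\cap B| \ge 2(k+f+\epsilon) - |N|$, and this is only $\ge k+\epsilon$ when $|N| \le k+2f+\epsilon$. The lemma hypothesizes $|N|\ge k+2f+\epsilon$, so for a decoder that collects more than the minimum the pigeonhole bound degenerates. (The paper's proof quietly fixes $|N|=k+2f+\epsilon$, so this is a robustness issue more than a fatal flaw, but it should be flagged.)

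Second, and more substantively: you then apply the rank lemma to $\hat A'$, the submatrix of \emph{all} rows in $A\cap B$. But a corrupted packet can have an adversarially chosen coefficient vector $r_j$ (the adversary ``may forge or alter packets''), and such a packet can lie in $A\cap B$ (it just needs $r_j\cdot(\overline{b_l}-\bar x')=0$). Lemmas~\ref{lemma:prob_uni}/\ref{lemma:prob_log} only bound the rank of matrices whose rows are drawn from the intended distribution, so they say nothing about $\hat A'$ as you have defined it. The subtlety you flag at the end — that the intersection is chosen adaptively in $\bar x'$ — is real, but it is a different issue, and a union bound over $\bar x'$ alone does not repair the presence of adversarial rows.

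Both gaps close at once if you drop the pigeonhole step and argue as the paper does: the satisfying set $B$ of $\bar x'$ has size $\ge k+f+\epsilon$, and at most $f$ packets in all of $N$ are corrupted, so $B$ contains at least $k+\epsilon$ \emph{uncorrupted} packets, whose coefficient vectors are honestly drawn. Every uncorrupted packet's equation is satisfied by $\overline{b_l}$, so these $\ge k+\epsilon$ uncorrupted packets automatically lie in $A\cap B$. Restricting $\hat A'$ to exactly those rows, the rank lemma applies, $\mathrm{rank}(\hat A')=k$ with high probability, and $\hat A'(\overline{b_l}-\bar x')=0$ forces $\bar x'=\overline{b_l}$. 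This works for any $|N|\ge k+2f+\epsilon$ and never touches an adversarial row. In short, the paper counts uncorrupted packets inside a single satisfying set rather than intersecting two satisfying sets; that change of viewpoint is what avoids both problems.
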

\begin{proof}
Knowing $f$, $k$ an $\epsilon$, the decoder collects $k+2f+\epsilon$ packets. Since there are at most $f$ packets corrupted, there is at least one subset of $k+f+\epsilon$ un-corrupted packets. Denote it by $S$. There is at least one solution (the true one, denoted $\bar{x}$) satisfying all the equations in $S$. However, since $S$ includes at least $k+\epsilon$ uncorrupted packets, which, in turn, with very high probability, contain $k$ independent equations, $\bar{x}$ is actually the only solution satisfying all equations in $S$.

It remains to show that the decoder cannot find a different set of $k+f+\epsilon$ packets, all satisfying a different solution. Consider a set of $k+f+\epsilon$ packets, for which there is a solution $\tilde{x}$ satisfying all $k+f+\epsilon$ equations. If this set includes some subset of $k+\epsilon$ un-corrpted packets, then, since this subset includes $k$ independent un-corrupted equations, $\tilde{x}=\bar{x}$. If there is no such subset, then there are more than $f$ corrupted equations, which is a contradiction.   
\end{proof}
Note that $f<k$ and hence $c=\frac{f}{k+2f+\epsilon}<\frac{1}{3}$. Furthermore, 
note that the proof relies on the uniformity of the adversary since it assumes any set of uncorrupted packets of size $k+\epsilon$ satisfies Lemma \ref{lemma:prob_uni} or \ref{lemma:prob_log}, hence include, with hight probability, $k$ independent equations. When the adversary is selective, this is not necessarily the case, as the adversary can choose which packets to corrupt, and thus inflect a different distribution of uncorrupted packets at the decoder. In that case, the following lemma will hold.
\begin{lemma}\label{lemma:a_selective}
Assume an encoder generates packets according to the uniform or the log distribution, and that no more than $bk, b\in \R^+$, packets are corrupted by a selective offline adversary. Let $N = \{p_j | p_j=(r_j, f_{\overline{b_l}}(r_j))\}$ be the set of packets collected at the decoder, with $|N|\ge (a+b)k, a>b$. Define $\hat{A} = (r_j)$ and
let $\bar{b} = (f_{\overline{b_l}}(r_j))$. Then, with probability at least $1-2^{-k(a-b-1-ah(b/a))}$, $b_l$ is the
only solution to the following equation system which satisfies at least
$ak$ equations:
$\hat{A}\cdot\bar{x} = \bar{b}$.
\end{lemma}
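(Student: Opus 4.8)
The plan is to follow the structure of the proof of Lemma~\ref{lemma:a_random}, but to replace its appeal to Lemmas~\ref{lemma:prob_uni} and~\ref{lemma:prob_log} --- which require the coding vectors reaching the decoder to form an untouched i.i.d.\ sample --- by an appeal to Lemma~\ref{lemma:prob_selective}, whose ``every subset of size $(a-b)k$ contains $k$ independent vectors'' guarantee is exactly what survives once a selective adversary is allowed to pick the packets it corrupts. I would first isolate a high-probability event $\mathcal{G}$, then argue existence of the solution $b_l$ deterministically, and finally argue its uniqueness, conditioned on $\mathcal{G}$, by a rank argument.

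\emph{The good event.} Since $|N|\ge(a+b)k$ and at most $bk$ of the collected packets are corrupted, the decoder holds at least $ak$ honest packets, whose coding vectors follow the uniform or the log distribution. Let $\mathcal{G}$ be the event that every size-$(a-b)k$ subset of these honest packets contains a linearly independent subset of size $k$. Invoking Lemma~\ref{lemma:prob_selective} with a set consisting of $ak$ of the honest packets (so that its parameters $a$ and $b$ coincide with ours) yields $\Pr[\mathcal{G}]\ge 1-2^{-k(a-b-1-ah(b/a))}$. I condition on $\mathcal{G}$ from here on.

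\emph{Existence and uniqueness.} By construction $\bar x:=b_l$ satisfies every honest equation of $\hat A\bar x=\bar b$, and there are at least $ak$ such equations, so $b_l$ satisfies at least $ak$ equations. For uniqueness, suppose $\tilde x$ satisfies a set $T$ of at least $ak$ equations. Since at most $bk$ collected packets are corrupted, at least $|T|-bk\ge(a-b)k$ of the equations in $T$ are honest; choose an honest $T_0\subseteq T$ with $|T_0|=(a-b)k$. Every equation of $T_0$ is satisfied by $\bar x$ (because $T_0$ is honest) and by $\tilde x$ (because $T_0\subseteq T$), so every row of $\hat A$ indexed by $T_0$ has zero inner product with $\tilde x-\bar x$; on $\mathcal{G}$ those rows span $\mathbb{GF}(2)^k$, forcing $\tilde x=\bar x$. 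Combined with the bound on $\Pr[\mathcal{G}]$, this gives the statement.

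\emph{The hard part.} The only non-routine step is justifying that the honest packets ``follow the uniform or log distribution'' in the sense demanded by Lemma~\ref{lemma:prob_selective}: an offline selective adversary selects the $bk$ packets it corrupts after seeing all coding vectors, so the honest packets it leaves behind are not a fresh i.i.d.\ sample, and this is precisely why Lemmas~\ref{lemma:prob_uni} and~\ref{lemma:prob_log} cannot be used here as in Lemma~\ref{lemma:a_random}. The fix is to make $\mathcal{G}$ a statement about the full i.i.d.\ batch that is indifferent to which packets are subsequently deleted: running the union bound from the proof of Lemma~\ref{lemma:prob_selective} over the whole generated collection of $(a+b)k$ i.i.d.\ coding vectors shows that, with high probability, \emph{every} size-$(a-b)k$ subset of the generated vectors already contains $k$ independent ones, and any honest size-$(a-b)k$ subset is in particular one of these, so $\mathcal{G}$ holds whatever the adversary does. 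It then remains only to keep the counts straight so that the exponent comes out of the stated form $a-b-1-ah(b/a)$, positive for the value $a=a(b)$ the decoder picks.
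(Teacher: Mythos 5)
Your overall structure mirrors the paper's one-paragraph proof: condition on a rank event supplied by Lemma~\ref{lemma:prob_selective}, then argue existence ($b_l$ satisfies all $\ge ak$ honest equations) and uniqueness (any competitor shares $\ge(a-b)k$ honest equations with $b_l$, which span $\mathbb{GF}(2)^k$ on the good event). The concern you raise in the ``hard part'' is a genuine subtlety that the paper's proof silently skips: an offline selective adversary chooses the $bk$ corrupted packets \emph{after} seeing the whole batch, so the $ak$ honest coding vectors are an adversarially selected subset of the i.i.d.\ draws, not a fresh i.i.d.\ sample, and Lemma~\ref{lemma:prob_selective} is stated (and proved) for the latter. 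You are right that some additional argument is needed to make the invocation adversary-proof.

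However, the fix you propose does not deliver the lemma's stated bound. Union-bounding the rank event over all $(a-b)k$-subsets of the full $(a+b)k$-vector batch replaces the $\binom{ak}{bk}$ appearing in the proof of Lemma~\ref{lemma:prob_selective} by $\binom{(a+b)k}{(a-b)k}=\binom{(a+b)k}{2bk}$, giving a failure probability of order $2^{-k\left[(a-b)-1-(a+b)h\left(2b/(a+b)\right)\right]}$ rather than $2^{-k\left(a-b-1-ah(b/a)\right)}$. Since $\binom{(a+b)k}{2bk}\ge\binom{ak}{bk}$, this is always a weaker bound, and for the paper's own illustrative parameters $a=7$, $b=1$ the new exponent is negative, so the resulting bound is vacuous. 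You acknowledge the need to ``keep the counts straight'' but leave the reconciliation open; as written, the proposal establishes the qualitative conclusion (uniqueness with high probability for $a$ sufficiently large compared to $b$) but not the exponent $a-b-1-ah(b/a)$, and it is not clear from either your argument or the paper's that this particular exponent is in fact attainable against an offline selective adversary.
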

\begin{proof}
The proof is similar to that of Lemma \ref{lemma:a_random}. The decoder collects $(a+b)k$ packets. Since there are at most $bk$ corrupted ones, the remaining $ak$ include an uncorrupted subset of size $(a-b)k$ and by Lemma \ref{lemma:prob_selective} include $k$ independent packets. This constitutes a valid solution. Any other solution which satisfies at least $ak$ equations must coincide with that solution since at least $(a-b)k$ of the $ak$ equations are from uncorrupted packets and hence include $k$ independent equations. 
\end{proof}
It follows from Lemma~\ref{lemma:a_random} that the solution, $\bar{x}$, to the system
of equations $\hat{A}\cdot\bar{x} = \bar{b}$, which correctly solves the largest
number of equations is the (only) correct solution. Hence, an algorithm which
solves the following optimization problem would have been invaluable: given a
system of equations over $\mathbb{GF}(2)$, find the best possible $\bar{x}$,
which solves the maximal number of equations. Obviously, this is an NP-complete
problem (by a simple reduction from the max-cut problem,
see~\cite{AK}).

Given Lemma~\ref{lemma:a_random} and the above argument, a simple exhaustive search
over all possible $b_l$ values yields the correct answer, which satisfies at
least $k+f+\epsilon$ equations
out of $N$ (or $(a-b)k$ out of the $(a+b)k$ in the selective model, according to Lemma \ref{lemma:a_selective}). Exhaustive search decoding is applicable to all $c$-bounded
adversaries. Nevertheless, as such a search is exponential in $k$ -- in fact, the running time of such an algorithm is in $O(2^k+k\cdot|N|)$ -- it may not be applicable in all situations. 

Next, we present better decoding algorithms, which trades decoding time for
increased amounts of packets, assuming that the adversary is bounded by 
values smaller than $1/3$.
\def\ALGB{
\begin{niralgo}{Randomized decoding algorithm}{algo:rand:decode}{8cm}{!htb}
xxx \= xxx \= xxx \= xxx \= xxx \= \kill
$N:$ A set of $g(k)+k+f+\epsilon$ packets\\
\\
{\bf good}$(s):=$ $s$ satisfies at least $k+f+\epsilon$ \\
\hspace{1.75cm} equations out of $N$\\
\\
{\bf Solve}$(S)$:\\
\num Try to solve the equation system induced by $S$.\\
\num Return the solution, or $\bot$ if no solution exists.\\
\\
{\bf Repeat}\\
\num $S \gets$ a random set of $N$, $|S|=k+e$\\
\num $s \gets$ {\bf Solve}$(S)$\\
\num {\bf if} $s\neq \bot \land good(s)$ {\bf then}\\
\numm {\bf return} $s$\\
\num {\bf fi}\\
{\bf Done}
\end{niralgo}
}
\noindent 
{\bf Randomized decoding algorithm}.
We use randomization in order to reduce the decoding complexity, given that the
adversary is $c$-bounded (for an appropriate $c$, to be defined later), and may
only corrupt at most $f$ packets. The algorithm is depicted in
Figure~\ref{algo:rand:decode}, and suits the uniform adversarial model, as we follow Lemma~\ref{lemma:a_random} from the previous section. An extension to the selective adversary is trivial, with, of course, the right choice of parameters, as given in Lemma \ref{lemma:a_selective}. 

Assume that we have collected $N$
packets, such that $|N| =
g(k)+k+f+\epsilon \ge k+2f+\epsilon$ for some $g(x)\ge f$ to be defined later.
Each subset of $k+\epsilon$ uncorrupted packets has a very high probability of containing a subset of $k$ independent packets. Let this probability be $p_\epsilon$, which can be calculated according to Lemma~\ref{lemma:prob_uni}. The algorithm will work as follows: choose a random subset, $S \subset N$, such that $|S| = k+\epsilon$. Let $s$ be the unique solution (if such exists) to the equation $\hat{A}\cdot \bar{x} = \bar{b}$, defined by $S$. If the obtained solution, $s$, satisfies more than $k+f+\epsilon$ equations out of $N$, then, according to Lemma~\ref{lemma:a_random}, $b_l = s$ and we are done.

Now, let $p_k$ be the probability of choosing a subset of $N$ with no corrupted
packets (hence, the probability of finding the right solution). It then follows that the expected number of iterations of the algorithm is $\frac{1}{p_k p_\epsilon}$.
As each iteration involves solving an equation system of size $k+\epsilon$
and validating the solution against at most $|N|$ packets, the running time of
each iteration is $O(k^2\log k \log \log k+|N|\log k)$, remembering there are approximately $\log k$ non-zeros coefficients in each equation. This yields an overall
expected running time of $O(\frac{k^2\log k \log \log k+|N|\log k}{p_k\cdot
p_\epsilon})$.

By Lemma \ref{lemma:prob_uni}, $p_\epsilon>1-2^{-\epsilon}$. Let us now approximate $p_k$.
\begin{eqnarray*}
p_k & =& \frac{ \binom{g(k)+k+\epsilon}{k+\epsilon}}{\binom{g(k)+k+f+\epsilon}{k+\epsilon}}
= \frac{\frac{(g(k)+k+\epsilon)!}{g(k)!}}{\frac{(g(k)+k+f+\epsilon)!}{(g(k)+f)!}} \\
& = & \frac{(g(k)+1)\cdots(g(k)+f)}
{(g(k)+k+1+\epsilon)\cdots(g(k)+k+f+\epsilon)}\\
& \ge & \left(\frac{g(k)}{g(k)+k+\epsilon}\right)^f =
\left(1+\frac{k+\epsilon}{g(k)}\right)^{-f}  >
\exp\left(-f\dfrac{k+\epsilon}{g(k)}\right).
\end{eqnarray*}
We can thus choose $g(k)$ according to our needs --
either minimizing decoding time or minimizing the number of packets we need to
collect. For example, choosing $g(x) > \frac{f\cdot(k+\epsilon)}{\log{b}}$, for
a constant $b$, results in $p_k>1/b$. Such a choice minimizes the run time to $O(k^2 \log k \log \log k)$, at
the expense of having to collect many messages ($g(x) \approx kf$). Furthermore,
such a decoding algorithm is only relevant when the adversary is at most
$\frac{1}{k}$-bounded.

\noindent {\bf Reducing packets size}. In Section~\ref{sec:enc}, we assume that
each packet generated is of size $k+m$ bits, where $m$ is the size of each
message block and $k$ is the number of blocks. The $k$ bits of redundancy are
used to denote which blocks participated in the creation of the packet. In
practice, these $k$ bits may be compressed, using several techniques, into a
logarithmic size. First, when choosing which blocks to {\sc xor} in a uniform
fashion, the sender can use a {\sc prng} to generate the required distribution,
as proposed in~\cite{LT}. The sender will use a different random seed for the
{\sc prng} for every packet generated, and only attach the seed to the packet.
A seed of size logarithmic in $k$ suffices, and the receiver can
proceed to recover which blocks participated in creating a specific packet using
the seed embedded in the packet.

A different approach, in which the use of a {\sc prng} (pseudo random number
generator) is not required, can be achieved by using 
the log distribution to 
select blocks for each packet;
first generate a binary vector, $r$,
of length $k$, in which each index is $1$ with probability
$\frac{(1+\delta)\log{k}}{k}$, for some small constant $\delta$. When sending
each packet, do not attach $r$ to the packet. Rather, attach the indices of
$r$ which contain $1$. 
\section{Concluding Remarks}
\label{s:cr}
Error correcting codes, erasure correcting codes, communication networks and
distributed computing are tightly coupled. The replication techniques used for
obtaining fault-tolerance in distributed computing may be replaced by error and
erasure correcting techniques. Beyond the memory overhead benefit, the dispersal
of information can be useful to protect and hide clear-text values when necessary. 
In this work, we have presented efficient encoding and decoding schemes that enhance the
correctability of well known rateless erasure correcting codes.\\
\noindent
{\bf Acknowledgments}. It is a pleasure to thank Michael Mitzenmacher for 
helpful inputs and for pointing out relevant related works, and Shachar Golan
for helpful comments.

\newpage
\ALGA
\ALGB
\FIGA
\FIGB
\FIGC

\begin{thebibliography}{99}

\bibitem{AK}
Edoardo Amaldi, Viggo Kann, ``The complexity and approximability of finding
maximum feasible subsystems of linear relations'', {\em Theoretical Computer
Science}, Volume 147, Issues 1-2, August 1995, pp. 181--210.

\bibitem{BLMR}
John W. Byers, Michael Luby, Michael  Mitzenmacher, and Ashutosh Rege,
``A Digital Fountain Approach to the Reliable Distribution of Bulk Data'',
{\em ACM SIGCOMM 1998},
pp. 56--67.

\bibitem{BD07} 
Amos Beimel, Shlomi Dolev, and Noam Singer, ``RT oblivious
erasure correcting'', {\it IEEE/ACM Transactions on Networking}, Vol. 15 No. 6,
pp. 1321-1332, December 2007.

\bibitem{Cooper00}
Colin Cooper, ``On the distribution of rank of a random matrix over a finite
field'', {\em Random Structures and Algorithms}, Volume 17, Issue 3-4, pp. 
197--212.

\bibitem{DF01} Shlomi Dolev, Boris Fitingof, Avraham Melkman, and Olga Tubman,
``Smooth and Adaptive Forward Erasure Correcting''.
{\it Computer Networks Journal}, special edition on
Overlay Networks, Volume 36, Issue 2-3, (July 2001) pp. 343--355.

\bibitem{DLY07}
Shlomi Dolev, Limor Lahiani, and Moti Yung, ``Secret Swarm Unit, Reactive k-Secret Sharing'',
INDOCRYPT, pp. 123--137, 2007.

%
%
\bibitem{LT} 
Michael Luby, ``LT Codes'', {\em The 43rd Annual IEEE Symposium on Foundations
of Computer Science}, pp. 271--280, 2002.

%
\bibitem{KFM04}
Maxwell N. Krohn, Michael J. Freedman and  David Mazi\`{e}res, ``On-the-fly verification of rateless
erasure codes for efficient content distribution'', pp. 226--240
{\em Proceedings of the IEEE Symposium on 
Security and Privacy, 2004}.

\bibitem{KK}
Ralf Koetter, Frank Kschischang, ``Coding for Errors and Erasures in Random Network Coding'', {\em Information Theory, IEEE Transactions on}, Volume 54, Number 8, pp. 3579--3591, 2008.

\bibitem{merkle}
Ralph C. Merkle, ``A digital signature based on a conventional encryption
function'', Crypto, pp. 369--378, 1987.

\bibitem{IDA}
Michael Rabin, ``Efficient dispersal of information for security, load
balancing, and fault tolerance'',{\em Journal of the ACM}, 36, 2 (Apr. 1989),
pp. 335--348. 

\bibitem{sparse_linear}
Douglas H. Wiedemann, ``Solving Sparse Linear Equations Over Finite Fields'', {\em IEEE Transactions on Information
Theory}, Volume 32, pp. 54--62, 1986.

\bibitem{RAPTOR}
Amin Shokrollahi, ``Raptor Codes'', {\em IEEE Transactions on Information
Theory}, Volume 52, pp. 2551--2567, 2006.

%
\bibitem{JLKHKM}
Sidharth Jaggi, Michael Langberg, Sachin Katti, Tracy Ho, Dina Katabi, Muriel
M\'edard, ``Resilient Network Coding in the Presence of Byzantine Adversaries'',
INFOCOM 2007: {\em 26th IEEE International Conference on Computer
Communications}, pp. 616--624, May 2007.

\bibitem{ZKMH}
Fang Zhao, Ton Kalker, Muriel M\'edard, Keesook J. Han, ``Signatures for
Content Distribution with Network Coding'', ISIT 2007: {\em 
IEEE International Symposium on Information Theory}, pp. 556--560, June 2007.
\end{thebibliography}
\end{document}